\newcolumntype{K}[1]{>{\centering\arraybackslash}p{#1}}
\DeclareMathOperator{\Span}{Span}
\newcommand{\vF}{\mathbb{F}}
\newcommand{\vZ}{\mathbb{Z}}
\newcommand{\cC}{\mathcal{C}}
 \newtheorem{problem}{Problem}
 \newtheorem{proposition}{Proposition}
 \newtheorem{corollary}{Corollary}
 \theoremstyle{definition}
 \theoremstyle{remark}
\numberwithin{equation}{section}
\newcommand{\lweightt}[1]{\mathrm{wt}_{L_t}\left(#1\right)}
\newcommand{\ba}{\mathbf{a}}
\newcommand{\bg}{\mathbf{g}}
\newcommand{\bc}{\mathbf{c}}
\newcommand{\be}{\mathbf{e}}
\newcommand{\bq}{\mathbf{q}}
\newcommand{\br}{\mathbf{r}}
\newcommand{\by}{\mathbf{y}}
\newcommand{\bx}{\mathbf{x}}
\newcommand{\bu}{\mathbf{u}}
\newcommand{\bv}{\mathbf{v}}
\newcommand{\bA}{\mathbf{A}}
\newcommand{\bC}{\mathbf{C}}
\newcommand{\bE}{\mathbf{E}}
\newcommand{\bG}{\mathbf{G}}
\newcommand{\C}{\mathcal{C}}
\newcommand{\R}{\mathcal{R}}
\newcommand{\supp}{\mathrm{Supp}}
\newcommand{\proj}{\mathrm{Proj}}
\newcommand{\rank}{\mathrm{rank}}
\newcommand{\Enc}{\mathrm{Enc}}
\providecommand*\email[1]{\href{mailto:#1}{#1}}
 \title{On single server private information retrieval in a coding theory perspective}
 \author{Gianira N. Alfarano\thanks{University of Zurich, Institute of Mathematics, Zurich, Switzerland 
(\email{gianiranicoletta.alfarano@math.uzh.ch},\email{karan.khathuria@math.uzh.ch},\email{violetta.weger@math.uzh.ch})
 } \and Karan Khathuria\footnotemark[1] \and Violetta Weger\footnotemark[1]
 }
\begin{document}

\maketitle

\begin{abstract}
    In this paper, we present a new perspective of single server private information retrieval (PIR) schemes by using the notion of linear error-correcting codes. Many of the known single server schemes are based on taking linear combinations between database elements and the query elements. Using the theory of linear codes, we develop a generic framework that formalizes all such PIR schemes. Further, we describe some known PIR schemes with respect to this code-based framework, and present the weaknesses of the broken PIR schemes in a generic point of view. 
\end{abstract}

\section{Introduction}\label{sec:intro}

Private information retrieval (PIR) was first introduced in \cite{chor1995private} to cope with the following problem: retrieving an element from a database, without revealing to the untrusted source managing the database any information about that element. Since its introduction, it has attracted many researchers and several works have addressed their focus on it. There have been proposed two solutions to this problem, namely, the \emph{information theoretical} one and the \emph{computational} one. The first one aims to guarantee that the server gets no information about the file that the user wants to retrieve. Solutions for multiple servers were presented in \cite{dvir20162, beimel2002, sun2018capacity, sun2017capacity, banawan2018capacity, freij2017private}. In the case of a single server, the trivial solution, i.e., downloading the whole database, is the only possibility to ensure information theoretical privacy. On the contrary, in computational PIR, the privacy is guaranteed assuming that the server has limited computational power. Hence, the computational PIR (cPIR) can be used also in the case of a single server. 

Most of the early cPIR schemes are based on the difficulty of number-theoretical problems, such as integer factorization (see for example \cite{dong2014fast, QR, lipmaa2017simpler, stern1998new}).
The known (non-trivial) single server cPIR constructions require to perform some cryptographic operations on each database element, which increase the computational cost of these schemes in comparison to the information theoretical ones. 
In \cite{sion}, Sion and Carbunar showed that the number-theoretical PIR schemes are not practical, and computing a PIR reply is always less efficient than sending the whole database. Moreover, such schemes, based on factoring an integer, will be insecure in the era of quantum computers \cite{shor}. 

Some recent constructions of PIR schemes use a fully homomorphic encryption (FHE) scheme. Yi \emph{et al.} presented in  \cite{FHEtoPIR} a generic way to construct a PIR from an FHE. Following this construction many PIR protocols have been proposed using FHE schemes based on problems in lattices and learning with error (LWE) problems \cite{brakerski2014efficient,XPIR,SealPIR,ali2019communication}. Recently, Aguilar-Melchor \emph{et al.} presented in \cite{XPIR}  XPIR, a PIR construction using a Ring-LWE based FHE scheme, that is computationally efficient but comes with a large communication cost. Following \cite{XPIR}, Angel \emph{et al.} in \cite{SealPIR} were able to significantly improve its communication cost with only slightly more computations compared to XPIR. Along with the scheme of Angel \emph{et al.}, the recent work of Ali \emph{et al.} \cite{ali2019communication} represent the state-of-the-art efficiency for PIR schemes. 

Recently,  Holzbaur,  Hollanti  and  Wachter-Zeh  have  proposed in \cite{holzbaur2020computational} the first single server PIR based on coding theory. However, their proposal was attacked in \cite{bordage2020privacy}. The primary idea in \cite{holzbaur2020computational} is to generate the query by hiding carefully chosen error vectors using codewords from a random linear code. The linear code is kept secret by the user in order to obtain privacy. The same idea was previously used by Aguilar-Melchor and Gaborit in a lattice-based PIR scheme \cite{amg}, without using the notion of linear codes. The scheme was later attacked by Liu and Bi \cite{liu2016cryptanalysis} using lattice reduction algorithms. 

Interestingly, the idea of hiding query information using linear codes can be observed, directly or indirectly, in several other PIR schemes. In this paper, we develop a unified framework that describes all such PIR schemes. In particular, this framework characterizes all the single server PIR schemes that generate replies by contracting the database elements and the query elements using linear combinations. The main aim of this paper is to bring together and analyze several existing single server PIR schemes in a coding theoretic perspective. 

The framework is based on two key elements: a linear code that hides the query information, and a retrieval function that allows the user to retrieve the desired file from a linearly entangled  reply. On one hand, the notion of linear codes describes the common features of several existing PIR schemes, and on the other hand, the retrieval function describes the key differences between the schemes. In terms of the framework, the privacy of a PIR scheme heavily relies on the retrieval function. We observe that several choices of retrieval functions are not safe to use, for example, finite field homomorphisms and vector space homomorphisms. Moreover, we discuss the weaknesses of many broken PIR schemes with respect to this code-based framework. 

The paper is organized as follows: 
in Section \ref{sec:preliminaries}, we introduce the notation that will be used throughout the paper and give the background on single server private information retrieval, and linear codes over finite fields and over rings. In Section \ref{sec:general_framework}, we present the code-based framework and discuss the security in a general point of view. In Section \ref{sec:examples}, we provide four different examples of PIR schemes, described in terms of the code-based framework. The first example is a basic scheme that uses a finite field homomorphism as the retrieval function. The rest of the examples are based on the existing PIR schemes \cite{holzbaur2020computational}, \cite{amg} and \cite{XPIR}, respectively. Finally, in Section \ref{sec:remarks}, we draw some theoretical remarks on the generality of the framework, and on the security of single server PIR schemes.


\section{Preliminaries}\label{sec:preliminaries}

In this section we  introduce the notation that we use in the paper and we recall some background on the theory of single server PIR.
Moreover, we introduce the basic notions of error-correcting linear codes. 

\subsection{Notation}
In this paper, we denote by $\mathcal{R}$ a ring and by $\mathcal{R}^\times$ the set of invertible elements in the ring $\mathcal{R}$. Moreover, let $q$ be a prime power, then we denote by $\vF_q$ the finite field of size $q$. 

We use bold lower case, respectively bold upper case letters to denote row vectors, respectively matrices. When we consider column vectors, we use   the transpose symbol. The identity matrix of size $k$ is denoted by $\mathbf{I}_k$.
Given a vector $\mathbf{x}$ of length $n$ and a set $S \subset \{1, \ldots, n\}$,  we denote by $\mathbf{x}_S$ the projection of $\mathbf{x}$ on the coordinates indexed by $S$.
In the same way, $\mathbf{M}_{S}$ denotes the projection of the $k\times n$ matrix $\mathbf{M}$ to the columns indexed by $S$.

For a set $S$ we denote by $S^C$  its complement. The support of a vector $\mathbf{x}\in \mathbb{F}_q^n$ is denoted by  $\text{Supp}(\mathbf{x})=\{ 1 \leq i \leq n \mid x_i \neq 0\}.$

The $i$-th entry of a vector $\mathbf{x} \in \mathbb{F}_q^n$ is denoted by $\mathbf{x}[i]$, for $i \in \{1, \ldots, n\}$.

Given a set $S$ and a distribution $\chi$ on $S$, $x \leftarrow \chi$ represents a sample $x$ from $S$ following the distribution $\chi$.

\subsection{Single server private information retrieval}
A single server PIR is a scheme involving two parties, the \emph{user} and the \emph{server}. The server manages a database containing some public information, and the user is interested in retrieving some entries of the database, without revealing which item was queried. 

\subsubsection{Basic description}
A basic description of a single server PIR scheme is as follows. Let the database be denoted by $\mathcal{DB}=\{db_1,\ldots,db_N\}$, containing $N$ files, and suppose the user wishes to retrieve the $i$-th file $db_i$. The user first constructs a \textit{query} $Q = \{q_1,\ldots,q_N\}$, which hides the information about the index $i$, and sends it to the server. The server computes a \textit{response} by performing certain operations between $q_j$ and $db_j$ for each $j$, and returns it to the user. The scheme is said to be \emph{correct} if the user can retrieve the desired file $db_i$ from the response.

\subsubsection{Communication and computational cost}
A simple solution to preserve the privacy is downloading the whole database. However, the communication cost of this operation, measured as the total number of bits exchanged by user and server, in the trivial case is too high, namely $\mathcal{O}(N)$ where $N$ is the size of the database.
Modern PIR protocols allow the user to retrieve data from the database, with a communication complexity much smaller than $\mathcal{O}(N)$. Some common methods can be used to improve the communication cost of any PIR scheme. In Section \ref{sec:techniques}, we discuss such techniques in detail.

Another important aspect of a single server PIR scheme is the computational cost. Since the database has to process each entry of the query, the schemes are computationally expensive.

\subsection{Linear codes}
\subsubsection{Over finite fields}
Let $\bx$ be a vector in $\vF_q^n$. The \emph{Hamming weight} of $\bx$ is denoted by $\mathrm{wt}(\bx)$ and it is defined as the number of its nonzero entries, i.e., it is the size of its support. The \emph{Hamming distance} between two vectors $\bx,\by \in \vF_q^n$ is defined as the number of components in which the two vectors differ, i.e.,  $d(\bx,\by) = |\{i \ \mid \ x_i\ne y_i\}|$. 

 An $[n,k]_q$ \emph{linear code} $\mathcal{C}$ is a $k$-dimensional subspace of $\vF_q^n$ endowed with the Hamming distance and the elements of $\mathcal{C}$ are called \emph{codewords}. 

 The \emph{minimum distance} $d$ of $\mathcal{C}$ is the quantity
 $$d:=\min\{d(\mathbf{x},\mathbf{y})\mid \mathbf{x},\mathbf{y} \in \mathcal{C}, \mathbf{x} \neq \mathbf{y} \}.$$
 When the minimum distance $d$ of a linear code $\mathcal{C}$  is known, then $\mathcal{C}$ is denoted by $[n,k,d]_q$. 
 
 A matrix $\mathbf{G}\in\vF_q^{k\times n}$ whose rows form a basis for $\mathcal{C}$ is called \emph{generator matrix} of $\mathcal{C}$. Hence, we can define the code $\mathcal{C}$ as $\{\bv \in\vF_q^n\mid \bv=\bu \mathbf{G}^\top, \bu\in\vF_q^k\}$. Similarly, we can define the code $\cC$ as the kernel of a matrix $\mathbf{H}\in\vF_q^{(n-k)\times n}$, i.e. $\mathcal{C}:=\ker(\mathbf{H}) = \{\bv\in\vF_q^n\mid \mathbf{H}\bv^\top = \mathbf{0}\}$. Such a matrix is called \emph{parity-check matrix} for the code $\mathcal{C}$. An information set of an $[n,k,d]_q$ code $\mathcal{C}$ is a set $I \subset \{1, \ldots, n\}$ of size $k$, such that $\mid \mathcal{C} \mid = \mid \mathcal{C}_I \mid,$ where $\mathcal{C}_I$ denotes the restriction of all codewords to the entries indexed by $I$.

\subsubsection{Over rings}
Let $\R$ be a commutative ring with identity. A \textit{linear code} $\cC$ of length $n$ over $\R$ is an $\R$-module in the space $\R^n$. 

A linear code $\cC$ of length $n$ over $\R$ is called \textit{cyclic} if $\bc = (c_1,\ldots,c_n) \in \cC$ implies $(c_{n},c_1,\ldots,c_{n-1}) \in \cC$. Equivalently, $\cC$ is an ideal of the ring $\R[x]/(x^n-1)$. 

A linear code $\cC$ of length $n$ over $\R$ is called \textit{constacyclic} if $\cC$ is an ideal of the ring $\R[x]/(x^n+1)$.



\section{Code-based framework}\label{sec:general_framework}

In this section, we present a generic framework for single server PIR schemes by using the notion of error-correcting codes. For simplicity, we present the framework using a simple database setup, later we discuss different kinds of database setups that can be used to improve the communication complexity.
\subsection{Code-based framework}

Before we describe the framework in detail, we highlight some elements that are used in the framework:
\begin{itemize}
\item We describe the generic framework over a finite commutative ring $\R$ using a retrieval function $f:\R \to \R$ and three subsets $X,Y,Z$ of $\R$. 
\item The database files belong to the set $X$.
\item In order to generate queries, we fix a randomly chosen linear code $\cC$ over $\R$. Each element of the query is the sum of a randomly chosen codeword in $\cC$ and an error vector over $\R$.
\item To generate the error vectors corresponding to the non-desired files we use the set $Y$, whereas for the desired file we use the set $Z$.
\end{itemize}

\paragraph{Setup:}

We define a retrieval function $f: \R \to \R$, and subsets $X,Y,Z \subseteq \R$ satisfying:
\begin{enumerate}
    \item $f$ is a non-zero map. 
    \item $Y \subseteq \ker(f):= \{x \in \R: f(x)=0\}$ such that any linear combination of elements in $Y$ with scalars in $X$ belongs to $\ker(f)$, i.e., $x_1y_1+x_2y_2+ \cdots + x_jy_j \in \ker(f)$ whenever $x_1,\ldots,x_j \in X$ and $y_1,\ldots,y_j \in Y$.
    \item $Z \subseteq f^{-1}(\R^\times)$ such that $f(y+xz) = xf(z)$ for all $y \in \ker(f), x \in X$ and $z \in Z$.
\end{enumerate}

Note that  $f$ does not need  to be a ring homomorphism, it can be any kind of function from $\R$ to $\R$ satisfying the above three conditions. 

Let $\mathbf{M} = (m_{i}) \in X^{N}$ represents the database, i.e., there are $N$ files in the database. Suppose that the user wants to retrieve the $b$-th file from the database. \\
Let $\cC$ be a random linear code over $\R$ of length $n$, i.e., $\C$ is an $\R$-submodule of $\R^n$.

\paragraph{Query generation:}
Let $\bg_1,\ldots,\bg_m$ be generators of $\C$ as an $\R$-module, and let $\Enc: \R^m \to \R^n$ be an encoding map of $\C$. Note that $\Enc$ is an $\R$-linear map given by $(a_1,\ldots,a_m) \mapsto a_1 \bg_1 + \cdots + a_m \bg_m$.\\
Let $\ba_1,\ba_2,\ldots,\ba_N$ be randomly chosen elements in $\R^m$, and define $\bc_i = \Enc(\ba_i)$ for all $i \in \{1,\ldots,N\}$. 

Now, let $v$ be a randomly chosen fixed element in $\{1,\ldots,n\}$ and we randomly choose error vectors $\be_1,\be_2,\ldots,\be_N$ in $\R^n$, such that they satisfy the following conditions that allow the reply extraction:
\[ \be_b[v] \in Z \quad \mbox{and} \quad \be_i[v] \in Y \mbox{ for all } i \neq b. \]

Let $\bq_i:=(\ba_i,\bc_i+\be_i)$ for all $i\in \{1,\dots, N\}$.
The query is then given by 
\[Q := \{ \bq_1, \bq_2,\ldots,\bq_N \}.\]

\paragraph{Reply generation:} 
The response is generated by 
computing 
\[\br = \sum_{i=1}^N m_{i} \bq_i = \sum_{i=1}^N (m_{i} \ba_i, m_{i} (\bc_i+\be_i)) =: (\br_1,\br_2).\]

\paragraph{Reply extraction:}

First we perform the decoding by applying the encoding map on $\br_1$, and obtain:
\[\br_2 - \Enc(\br_1) = \sum_{i=1}^N m_i \be_i.\]
 After that we can use the retrieval function $f$ on the $v$-th coordinate, 
 \begin{align*}
     f\left(\sum_{i=1}^N m_i \be_i[v] \right) & = f\left(\sum_{i\neq b} m_i \be_i[v] \right) + f(m_b \be_b[v])  \\
     &= m_b f(\be_b[v]).
 \end{align*} 
The above equalities follow from the conditions of the retrieval function.
Now, since we know $f(\be_b[v])$ and we have that $f(\be_b[v]) \in f(Z) \subseteq \R^\times$,  we can retrieve the desired file $m_b$.

\subsection{Communication complexity and different database setups} \label{sec:techniques}
With respect to the basic description of the code-based framework, the communication cost is more than the size of the whole database. Indeed, for each file which is an element in $\R$, we are sending a query element in $\R^{n+m}$. Thus the total communication cost is $(N+1)$ times the size of an element in $\R^{m+n}$. We can improve the communication complexity by using a  matrix database setup \cite{chor1995private} or iterative response techniques: 
\begin{itemize}
    \item \textit{Matrix setup of database}: In order to reduce the communication complexity, one can see the database as an $s \times t$ matrix, where each element of the matrix is a database file. Now, the user generates a \emph{query} $Q = \{\bq_1,\dots, \bq_t\}$ containing $t$ elements. For each query, the server replies by sending back the \emph{response} $R=\{\br_1,\dots, \br_s\}$, which contains $s$ responses corresponding to the $s$ rows of the database matrix. This technique was introduced in \cite{chor1995private}. 
    Using this approach and assuming $s=t = \sqrt{N}$, the communication complexity is $2\sqrt{N}$ times the size of an element in $\R^{m+n}$.
    \item \textit{Iterative reply generation:} In this technique, one splits each file into $L$ parts and repeats the query to retrieve each part of the file. Since the query is generated in order to retrieve only small portions of the desired file, the size of the ambient space reduces accordingly. Hence, relative to the size of the database, the query size reduces by a factor of $L$, and the response size increases by the same factor. 
\end{itemize}

\subsection{Security} \label{sec:general-security}

The security of a single server computational PIR scheme is based on the difficulty of identifying the index of the desired file by looking at the query. With respect to the code-based framework, we can describe the security using the following distinguishability problem.

\begin{problem}[Distinguishability Problem]\label{problem}
Consider the notations of the setup and the query generation process of the code-based framework. Given the query vectors $\bq_1,\bq_2,\ldots,$ $\bq_N$, determine the index $b$ of the desired file.
\end{problem}

The difficulty of solving the distinguishability problem depends highly on the choice of the retrieval function $f$. In the following, we present two generic strategies that can be used to solve this problem. However, the computational cost of these strategies directly relies on the choice of the retrieval function and the error vectors $\be_1,\ldots,\be_N$.
\begin{enumerate}
    \item Consider the following matrix  consisting of the query vectors 
$$\mathbf{A} = \begin{pmatrix} \mathbf{q}_1 \\ \mathbf{q}_2 \\ \vdots \\ \mathbf{q}_N \end{pmatrix} = \begin{pmatrix} \mathbf{a}_1 & \bc_1+\be_1 \\ \mathbf{a}_2 & \bc_2 + \be_2 \\ \vdots \\ \mathbf{a}_N & \bc_N + \be_N \end{pmatrix} \in \R^{N \times (m+n)}.$$
    Observe that the vectors $\left(\be_1[j],\be_2[j],\ldots,\be_N[j]\right)$ for all $j \in \{1,\ldots,n\}$ belong to the column span of $\bA$. We recall that the $v$-th coordinate of the error vectors are chosen in a special way, i.e., $\be_b[v] \in Z \subseteq f^{-1}(\R^\times)$ and $\be_i[v] \in Y \subseteq \ker(f)$ for all $i \neq b$. Hence, one could solve Problem \ref{problem} by finding the vector $\left(\be_1[v],\be_2[v],\ldots,\be_N[v]\right)$ in the column span of $\bA$. 
    
    \item Let $\bA$ be the query matrix as defined above. For each $i \in \{1,\ldots,N\}$, let $\bA_i$ be the submatrix of $\bA$ obtained by deleting the $i$-th row. Clearly, by construction, $\bA_b$ has distinct properties compared to $\bA_i$ for any $i\neq b$. Thus, if there exists an (algebraic or non-algebraic) invariant that can distinguish $\bA_b$ from $\bA_i$ for any $i \neq b$, then Problem \ref{problem} can be solved by computing this invariant for each $\bA_1,\ldots,\bA_N$. 
\end{enumerate}

\section{Examples of different PIR's in our framework} \label{sec:examples}
In this section, we discuss several examples of single server PIR schemes that are based on different kinds of retrieval function. In each case, we analyze the security with respect to the distinguishability problem.  In Table \ref{comparison}, we summarize all the differences among the schemes.

\subsection{Basic PIR scheme using finite field isomorphism} \label{sec:pirate} 
In the following we describe the simplest case of the code-based framework, i.e., by considering linear codes over an arbitrary finite field and a field homomorphism for the retrieval function.

\subsubsection{Scheme}
\paragraph{Setup:}
Since the identity map is the only non-zero field endomorphism, the retrieval function $f: \vF_q \to \vF_q$ has to  be the identity map. We consider the sets $X = \vF_q$, $Y = \ker(f) = \{ 0\}$ and $Z= f^{-1}(\vF_q^\times) = \vF_q^\times$. It is easy to see that $f$ satisfies all the conditions of a retrieval function.\\

Let $\mathbf{M} = (m_i) \in \vF_{q}^{N}$ represents the database, i.e., there are $N$ files in the database, each file is of size $q$. Let $\cC$ be a random linear $[n,k]$ code over $\vF_q$.
The code $\mathcal{C}$ is kept secret by the user.

\paragraph{Query generation:}
Let $\bG$ be a generator matrix of $\C$, and let $I \subseteq \{1,\ldots,n\}$ be an information set. We use $\bG$ to perform the encoding, i.e., the encoding map  $\Enc: \vF_q^k \to \vF_q^n$ is given by $\ba \mapsto \ba \bG$.

Let $\ba_1,\ldots,\ba_N$ be randomly chosen vectors in $\vF_q^k$, and define the corresponding codewords
$\bc_i:= \Enc(\ba_i) = \ba_i \bG$  for all $i \in \{1, \ldots, N\}$.

Note that since $I$ is an information set, we have $(\bc_i)_I = \ba_i \bG_I$ for all $i \in \{1,\ldots,N\}$. Recall that in the code-based framework we send $\ba_i$'s in the query to facilitate the decoding in the reply extraction process. However, in this case, this can equivalently be achieved by adding no errors at the coordinates that are indexed by $I$. In particular, let $v$ be a random element in $I^C$, and we randomly choose error vectors $\be_1,\be_2,\ldots,\be_N$ in $\vF_q^n$ such that
\begin{enumerate}
    \item $\supp(\be_i) \subseteq I^C$ for all $i \in \{1, \ldots, N\}$,
    \item $\be_i[v] = 0$ for all $i \neq b$, and $\be_b[v] \neq 0$.
\end{enumerate}
Let $\bq_i:=\bc_i+\be_i$ for all $i \in \{1,\dots, N\}$.
The query is then given by 
\[Q := \{ \bq_1, \bq_2,\ldots,\bq_N\}.\]

\paragraph{Reply generation:} 

The database computes $$\br= \sum_{i=1}^N m_i \bq_i \in \mathbb{F}_q^n.$$

\paragraph{Reply extraction:} Write $\br = \bc + \be$, where $\bc := \sum_{i=1}^N m_i \bc_i$ and $\be := \sum_{i=1}^N m_i \be_i$.\\

Since $I$ is an information set and $\supp(\be) \subseteq I^C$, we can perform decoding on $\br$ by computing
$$\br - \br_I \bG_I^{-1} \bG = \be = \sum_{i=1}^N m_i \be_i.$$
We now only consider the $v$-th coordinate of $\be$ and apply the identity retrieval function, which gives $m_b\be_b[v]$, as for all $i \neq b$ we have that $\be_i[v] =0.$
Since $\be_b[v] \neq 0$, we can retrieve $m_b$.

\subsubsection{Security}

As we discussed in Section \ref{sec:general-security}, the security of the presented PIR scheme relies on the hardness of solving the distinguishability problem (see Problem \ref{problem}). In this case, the distinguishability problem can be solved in polynomial time using the  first strategy mentioned in Section \ref{sec:general-security}. 

Let $\bA$ be the matrix containing all the query vectors as rows, i.e., $$\mathbf{A} = \begin{pmatrix} \mathbf{q}_1 \\ \mathbf{q}_2 \\ \vdots \\ \mathbf{q}_N \end{pmatrix} = \begin{pmatrix} \bc_1+\be_1 \\   \bc_2 + \be_2 \\ \vdots \\  \bc_N + \be_N \end{pmatrix} = \bC + \bE,$$ with $\bC = \begin{pmatrix} \bc_1 \\ \vdots \\ \bc_N \end{pmatrix}$ and $\bE = \begin{pmatrix} \be_1 \\ \vdots \\ \be_N \end{pmatrix}$.
Since $I$ is an information set, we have
\begin{align*}
    \bA_I &= \bC_I + \bE_I = \bC_I, \\
    \bA_{I^C} & = \bC_{I^C} + \bE_{I^C} \\
    & = \bC_I \bG_I^{-1} \bG_{I^C} + \bE_{I^C} \\
    & = \bA_{I} \bG_I^{-1} \bG_{I^C} + \bE_{I^C}.
\end{align*} 
This implies that $$\bE_{I^C} = \bA_{I^C}  - \bA_I \bG_I^{-1} \bG_{I^C} ,$$ and hence the vector $\left(\be_1[v],\be_2[v],\ldots,\be_N[v]\right)$ belongs to the column span of $\bA$. We recall that $\be_b[v] \neq 0$ and $\be_i[v] = 0$ for all $i \neq b$. This means that the $b$-th unitary vector, i.e., the all zero vector  
having  the entry 1 at the $b$-th position, is in the column span of $\bA$. 

An attacker can easily find such a vector by simply going through all  $N$ unitary vectors and checking their existence in the column span of $\bA$. Moreover, existence of another vector of Hamming weight one in the column span of $\bA$ is very unlikely. More precisely, given an $N \times (n-1)$ random matrix $\bA$ where $N>n$, the probability of having a weight one vector in the column span of $\bA$ is $(n-1)q^{(n-N)}$, which is negligible. 
Despite having a small probability, there exist at most $n$ unit vectors in the column span of $\bA$, which leaks information about the index $b$, since $n <N$.

\subsection{HHWZ PIR scheme} \label{sec:lukas} Recently, Holzbaur, Hollanti and  Wachter-Zeh have proposed the first single server PIR scheme based on coding theory in \cite{holzbaur2020computational}.  In this PIR scheme the authors consider the field extension $\mathbb{F}_{q^m}$ and secretly choose a partition of the basis over $\mathbb{F}_q$. Shortly after, this proposal has been attacked in \cite{bordage2020privacy}, using that the removal of one row within the query matrix and checking for the dimension of the rest reveals the position of the desired file.

In the following, we describe this PIR scheme presented in \cite{holzbaur2020computational} with respect to our code-based framework. Later, we also present the attack \cite{bordage2020privacy} in terms of solving the distinguishability problem. 

Note that the original PIR scheme differs from our description in the database and query setup. In \cite{holzbaur2020computational}, the authors consider the database elements to be $L \times \delta$ matrices over the base field $\vF_q$, and the query elements are also $\delta \times n$ matrices over the base field $\vF_{q^m}$. Note that the authors have used the technique of iterative reply generation, i.e., by using the same query to retrieve each  of the $L$ rows of the database file. In the following description, we consider $L=1$ and use an equivalent setup where the database files are single elements in $\vF_{q}$ and the query elements are vectors over $\vF_{q^m}$.

\subsubsection{Scheme}
In this case, we work over an extension of the finite field $\vF_q$ and the retrieval function is an $\vF_q$-linear map. 

\paragraph{Setup:}
Let $\{\beta_1,\ldots,\beta_m\}$ be a basis of $\vF_{q^m}$ as an $\vF_q$-vector space. Further, let $\mathcal{V}$ be the subspace $\Span_{\vF_q}(\beta_1,\ldots,\beta_s)$ and  $\mathcal{W}$ be the subspace $\Span_{\vF_q}(\beta_{s+1},\ldots,\beta_m)$, where $s$ is some integer in $\{1,\ldots,m\}$. The retrieval function is given as \begin{align*}
    \proj_{\mathcal{V}}:  \vF_{q^m}&  \to \vF_{q^m}, \\ \sum_{i=1}^m \lambda_i \beta_i &\mapsto \sum_{i=1}^s \lambda_i \beta_i.
\end{align*}
Let $X$ be the set $\vF_q$, $Y=\ker(f) = \mathcal{W}$ and $Z= f^{-1}(\vF_{q^m}^\times) = \mathcal{V} \setminus \{0\}$. It is easy to check that $\proj_\mathcal{V}$ satisfies all the conditions of the retrieval function.

Let $\mathbf{M} = (m_{i}) \in \vF_{q}^{N}$ be the database, i.e., there are $N$ files in the database, each file is of size $q$. Suppose the user wants to retrieve the $b$-th file from the database. Let $\cC$ be a random  $[n,k]$ linear code over $\vF_{q^m}$.

\paragraph{Query generation:}
For the encoding and decoding, we follow the same procedure as in Section \ref{sec:pirate}.

Let $\bG$ be a generator matrix of $\C$, and let $I \subseteq \{1,\ldots,n\}$ be an information set. 
We use $\bG$ to to perform the encoding, i.e., the encoding map is $\Enc: \vF_q^k \to \vF_q^n$ given by $\ba \mapsto \ba \bG$.

Let $\ba_1,\ldots,\ba_N$ be randomly chosen vectors in $\vF_{q^m}^k$, and define the corresponding codewords
$\bc_i:= \Enc(\ba_i) = \ba_i \bG$ for all $i \in \{1, \ldots, N\}.$

As in Section \ref{sec:pirate}, we perform the decoding by adding no errors at the coordinates that are indexed by $I$.

Let $v$ be a fixed element in $I^C$. Now, we choose error vectors $\be_1,\be_2,\ldots,\be_N$ randomly in $\vF_{q^m}^n$ such that
\begin{enumerate}
    \item $\supp(\be_i) \subseteq I^C$ for all $i \in \{1, \ldots, N\} $,
    \item $\be_i[v] \in \mathcal{W}$ for all $i \neq b$,
    and $\be_b[v] \in \mathcal{V} \setminus \{0\}$.
\end{enumerate}

Let $\bq_i:=\bc_i+\be_i$ for $i \in \{1,\dots, N\}$.
The query is then given by 
\[Q := \{ \bq_1, \bq_2,\ldots,\bq_N \}.\]

\paragraph{Reply generation:} 
The response is generated by 
computing 
\[\br = \sum_{i=1}^N m_{i} \bq_i.\]

\paragraph{Reply extraction:}

Write $\br = \bc + \be$, where $\bc = \sum_{i=1}^N m_i \bc_i$ and $\be = \sum_{i=1}^N m_i \be_i$.

Since  $I$ is an information set and $\supp(\be) \subseteq I^C$, we can perform the decoding on $\br$ by computing 
$$\br - \br_I \bG_I^{-1} \bG = \be = \sum_{i=1}^N m_i \be_i.$$

Now we consider the $v$-th coordinate of $\be$ and apply the retrieval function, which gives 
\[\text{Proj}_{\mathcal{V}} \left(\sum_{i=1}^N m_{i} \be_i[v] \right) = m_b \be_b[v].\]

This works because $\be_i[v] \in \mathcal{W}$ for all $i \neq b$, and $\be_b[v] \in \mathcal{V} \setminus \{0\}$. 
Moreover, since we know $\be_b[v]$, we can retrieve $m_b$.

\subsubsection{Security}
The original PIR scheme \cite{holzbaur2020computational} has been attacked in \cite{bordage2020privacy}, by solving the distinguishability problem. The attack follows the second strategy mentioned in Section \ref{sec:general-security}. 

Let $\bA$ be the matrix containing all the query vectors as rows, i.e., $$\mathbf{A} = \begin{pmatrix} \mathbf{q}_1 \\ \mathbf{q}_2 \\ \vdots \\ \mathbf{q}_N \end{pmatrix} = \begin{pmatrix} \bc_1+\be_1 \\   \bc_2 + \be_2 \\ \vdots \\  \bc_N + \be_N \end{pmatrix} = \bC + \bE,$$ with $\bC = \begin{pmatrix} \bc_1 \\ \vdots \\ \bc_N \end{pmatrix}$ and $\bE = \begin{pmatrix} \be_1 \\ \vdots \\ \be_N \end{pmatrix}$.

For each $i \in \{1,\ldots,N\}$, let $\bA_i$ be the submatrix of $\bA$ obtained by deleting the $i$-th row. Then the $\vF_q$-rank of these matrices satisfy the following proposition. 
\begin{proposition}\cite[Proposition 3.1]{bordage2020privacy}
Let $\bA$ be given as above. Then \[\rank_{\vF_q}(\bA) = \rank_{\vF_q}(\bC) + \rank_{\vF_q}(\bE).\] Moreover, for all $i \in \{1,\ldots,N\}$ \[\rank_{\vF_q}(\bA_i) = \rank_{\vF_q}(\bC_i) + \rank_{\vF_q}(\bE_i).\] \label{prop:rank_sum}
\end{proposition}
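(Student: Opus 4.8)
The key structural fact to exploit is that the query matrix decomposes as $\bA = \bC + \bE$ where $\bC$ has entries in a random linear code $\C$ over $\vF_{q^m}$ whose generator matrix $\bG$ has full rank $k$, and $\bE$ is supported entirely on the coordinate set $I^C$ of size $n-k$. The plan is to show that, generically (i.e.\ for a random choice of $\C$, which is exactly the setting of the scheme), the $\vF_q$-column space of $\bA$ splits as a direct sum of the column space of $\bC$ and the column space of $\bE$, so the ranks add. First I would set up the right ambient vector space: view each row $\bq_i \in \vF_{q^m}^n$ as a vector in $\vF_q^{mn}$ by expanding each coordinate in the fixed basis $\{\beta_1,\ldots,\beta_m\}$, so that $\rank_{\vF_q}(\bA)$ is just the ordinary rank of the resulting $N \times mn$ matrix over $\vF_q$, and similarly for $\bC$ and $\bE$. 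Since rank is subadditive, $\rank_{\vF_q}(\bA) \le \rank_{\vF_q}(\bC) + \rank_{\vF_q}(\bE)$ is immediate; the content is the reverse inequality.

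For the reverse inequality I would argue that the row space of $\bC$ over $\vF_q$ and the row space of $\bE$ over $\vF_q$ intersect trivially. Here is the mechanism: a vector in the row space of $\bE$, when restricted to the coordinates in $I$, is zero, because every $\be_i$ has support contained in $I^C$. On the other hand, a nonzero vector in the row space of $\bC$ cannot vanish on $I$: the rows of $\bC$ are $\vF_q$-linear combinations $\sum_i \mu_i \bc_i = \big(\sum_i \mu_i \ba_i\big)\bG$ with $\mu_i \in \vF_q$, and $I$ being an information set means $\bG_I$ is invertible, so the restriction to $I$ of such a combination is $\big(\sum_i \mu_i \ba_i\big)\bG_I$, which is zero only if $\sum_i \mu_i \ba_i = 0$, which forces the whole combination to be zero. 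Hence any vector lying in both row spaces must be zero, so the two row spaces are $\vF_q$-independent and the ranks add. The same argument applies verbatim to $\bA_i$, $\bC_i$, $\bE_i$ (deleting a row does not change that $\bE_i$ is supported on $I^C$, nor that $I$ is an information set for $\C$), giving the second identity.

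One subtlety I want to flag: the clean splitting "row space of $\bC$ meets row space of $\bE$ only in $0$" as I stated it is actually unconditional once $I$ is a genuine information set, so no genericity assumption on $\C$ is strictly needed for this particular claim — the randomness of $\C$ enters elsewhere in the attack (e.g.\ to control the actual numerical value of $\rank_{\vF_q}(\bE_i)$ versus $\rank_{\vF_q}(\bE)$, which is what distinguishes index $b$), not in Proposition~\ref{prop:rank_sum} itself. The main thing to be careful about is bookkeeping between the $\vF_{q^m}$-picture and the $\vF_q$-picture: one must check that $\vF_q$-linear combinations of the rows $\bc_i$ really do correspond to codewords of the form $\ba\,\bG$ with the coefficient vector $\ba = \sum \mu_i \ba_i$, which is fine precisely because $\bG$ has entries in $\vF_{q^m}$ and we scale only by elements of $\vF_q$, so $\bG$ and the information-set property are preserved. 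Beyond that, the argument is a direct support-versus-information-set disjointness computation with no real obstacle; I would simply cite \cite[Proposition 3.1]{bordage2020privacy} for the precise statement and reproduce this short argument.
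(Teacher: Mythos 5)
Your identification of the structural facts is largely correct — restricting to the information set $I$ annihilates the row space of $\bE$ but not that of $\bC$, so the $\vF_q$-row spaces of $\bC$ and $\bE$ do indeed intersect trivially. But the inference you draw from this is backwards. Row-space independence of $\bC$ and $\bE$ only tells you that $\dim\bigl(\mathrm{rowspace}(\bC)+\mathrm{rowspace}(\bE)\bigr)=\rank(\bC)+\rank(\bE)$, and since $\mathrm{rowspace}(\bA)=\mathrm{rowspace}(\bC+\bE)\subseteq\mathrm{rowspace}(\bC)+\mathrm{rowspace}(\bE)$ this yields precisely the inequality $\rank(\bA)\le\rank(\bC)+\rank(\bE)$ — that is, the subadditivity you already conceded is easy. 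It does not give the reverse inequality, which is what the proposition requires. A minimal illustration: take $N=2$, $\bC=\left(\begin{smallmatrix}1 & 0\\0&0\end{smallmatrix}\right)$, $\bE=\left(\begin{smallmatrix}0&0\\1&0\end{smallmatrix}\right)$. Their row spaces are $\mathrm{span}\{(1,0)\}$ and $\{0\}$... actually to make the point sharply about column support disjointness failing, take $\bC=\left(\begin{smallmatrix}1 & 0\end{smallmatrix}\right)$ and $\bE=\left(\begin{smallmatrix}0&1\end{smallmatrix}\right)$: the row spaces $\mathrm{span}\{(1,0)\}$ and $\mathrm{span}\{(0,1)\}$ are independent, each matrix has rank $1$, yet $\bA=(1\;1)$ has rank $1$, not $2$. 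So "row spaces independent $\Rightarrow$ ranks add" is simply false for a sum of matrices.

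The workable route exploits the same information-set dichotomy, but applied to columns rather than rows. Since $\bE_I=0$ and $\bC_{I^C}=\bC_I\,\bG_I^{-1}\bG_{I^C}$, one can perform $\vF_{q^m}$-linear (hence, after expansion over the basis, $\vF_q$-linear) column operations to transform $\bA$ into the concatenated matrix $[\bC_I\mid\bE_{I^C}]$. For a concatenation — unlike a sum — the column space genuinely is $\mathrm{colspace}(\bC_I)+\mathrm{colspace}(\bE_{I^C})$, so $\rank_{\vF_q}(\bA)=\rank_{\vF_q}(\bC_I)+\rank_{\vF_q}(\bE_{I^C})-\dim\bigl(\mathrm{colspace}_{\vF_q}(\bC_I)\cap\mathrm{colspace}_{\vF_q}(\bE_{I^C})\bigr)$, and one also checks $\rank_{\vF_q}(\bC)=\rank_{\vF_q}(\bC_I)$ and $\rank_{\vF_q}(\bE)=\rank_{\vF_q}(\bE_{I^C})$. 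What remains is to argue that the two $\vF_q$-column spaces inside $\vF_q^N$ intersect trivially. This is where the genericity of the construction (random $\ba_i$, random $\be_i$, and $N\ge mn$) actually enters — it is a general-position statement about two independently drawn subspaces of $\vF_q^N$, not an unconditional consequence of $I$ being an information set. Your closing remark that ``no genericity assumption on $\C$ is strictly needed'' therefore misreads where the randomness is used: it is needed precisely here, in Proposition~\ref{prop:rank_sum}, and not only in the subsequent corollary.

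To summarize: your row-space disjointness observation is correct but proves the wrong inequality; the proof needs to be recast in terms of column spaces of the column-reduced matrix $[\bC_I\mid\bE_{I^C}]$, and the remaining column-space disjointness is a probabilistic (generic-position) claim that must be acknowledged rather than dismissed.
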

In the case when $N<mn$, the query size becomes bigger than the size of the database, i.e., the scheme is no better than the trivial PIR protocol of downloading entire database. Hence, we assume $N \geq mn$ and we use the following corollary to distinguish the index $b$ in polynomial time. 

\begin{corollary}\cite[Corollary 3.2, Proposition 3.3]{bordage2020privacy}
Let $\bA_i$ be given as above. Then, with high probability,
\begin{enumerate}
    \item $\rank_{\vF_q}(\bA_b) = mn-s$, 
    \item for $i \neq b$, we have that $\rank_{\vF_q}(\bA_i) = mn$.
\end{enumerate}
\end{corollary}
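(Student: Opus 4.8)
The plan is to derive both rank statements from Proposition \ref{prop:rank_sum}, so that the work reduces to computing (with high probability) the $\vF_q$-ranks of the matrices $\bC_i$ and $\bE_i$ separately. First I would fix the information set $I$ and recall the two structural facts built into the query: all error vectors are supported on $I^C$, and on the single distinguished coordinate $v \in I^C$ we have $\be_b[v] \in \mathcal{V}\setminus\{0\}$ while $\be_i[v] \in \mathcal{W}$ for $i \neq b$. The point is that $\mathcal{V}$ and $\mathcal{W}$ are complementary $\vF_q$-subspaces of $\vF_{q^m}$, so a single entry lying in $\mathcal{V}$ contributes $s$ new dimensions over $\vF_q$ that no combination of entries from $\mathcal{W}$ (or from the $\vF_{q^m}$-span of codewords, which is handled separately) can produce.

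The codeword part is the easy half: $\bC$ is an $N \times n$ matrix whose rows are random codewords of an $[n,k]_{q^m}$ code, so with high probability $\bC$ has $\vF_{q^m}$-rank $k$ and hence $\vF_q$-rank $mk$; moreover deleting any one of the $N \ge mn$ rows does not change this, so $\rank_{\vF_q}(\bC_i) = mk$ for every $i$. For the error part I would argue about $\bE$ and $\bE_i$ as matrices over $\vF_q$ after expanding each entry of $\vF_{q^m}$ in the basis $\{\beta_1,\dots,\beta_m\}$: this turns the $N\times n$ matrix over $\vF_{q^m}$ into an $N \times mn$ matrix over $\vF_q$, whose columns are grouped into $n$ blocks of $m$ columns each. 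The columns indexed by $I$ are identically zero, so only the $n-k$ blocks indexed by $I^C$ can contribute. Since the entries $\be_i[j]$ for $j \in I^C$ are chosen uniformly at random — except on coordinate $v$ where they are constrained to lie in $\mathcal{W}$ for $i\neq b$ and in $\mathcal{V}\setminus\{0\}$ for $i=b$ — and $N \ge mn$, with high probability the corresponding $\vF_q$-column space is as large as the constraints allow. Concretely, for $\bE_b$ (delete row $b$, so coordinate $v$ carries only $\mathcal{W}$-values) the $v$-block contributes exactly $\dim_{\vF_q}\mathcal{W} = m-s$ columns of rank, while the other $n-k-1$ blocks contribute $m$ each, giving $\rank_{\vF_q}(\bE_b) = m(n-k-1) + (m-s) + mk$... wait, let me recount: the blocks outside $I$ number $n-k$; one of them is the $v$-block contributing $m-s$, the rest contribute $m$, so $\rank_{\vF_q}(\bE_b) = (n-k-1)m + (m-s) = m(n-k) - s$, and then $\rank_{\vF_q}(\bA_b) = mk + m(n-k) - s = mn - s$. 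For $i \neq b$, deleting row $i$ leaves row $b$ in place, so the $v$-block now also sees the $\mathcal{V}$-component $\be_b[v]$ and contributes the full $m$ dimensions, giving $\rank_{\vF_q}(\bE_i) = m(n-k)$ and $\rank_{\vF_q}(\bA_i) = mn$.

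I would carry out the steps in this order: (i) state and justify the complementarity $\vF_{q^m} = \mathcal{V} \oplus \mathcal{W}$ as $\vF_q$-spaces; (ii) compute $\rank_{\vF_q}(\bC) = \rank_{\vF_q}(\bC_i) = mk$ with high probability using randomness of the $\ba_i$ and $N \ge mn$; (iii) set up the $\vF_q$-expansion of $\bE$ into $mn$ columns and observe the $I$-blocks vanish; (iv) show each non-$v$ block in $I^C$ has full $\vF_q$-column rank $m$ with high probability, and the $v$-block contributes $m-s$ in $\bE_b$ but $m$ in $\bE_i$ for $i\neq b$; (v) combine via Proposition \ref{prop:rank_sum}. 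The main obstacle is making the ``with high probability'' claims in steps (ii) and (iv) precise: one needs that a small number of uniformly random vectors in a large space are $\vF_q$-independent (respectively span a prescribed subspace), and — more delicately — that the error rows, though sampled under a support constraint and a subspace constraint on one coordinate, still yield a full-rank $\vF_q$-column space in each block; this is a standard but slightly fussy union-bound estimate over the $O(q^{mn})$-sized column spaces, exactly of the flavour already used in the basic-scheme security analysis in Section \ref{sec:pirate}. I would therefore cite \cite[Corollary 3.2, Proposition 3.3]{bordage2020privacy} for the sharp probability bounds rather than reprove them, and present the argument above as the structural explanation of why the two ranks differ by exactly $s$.
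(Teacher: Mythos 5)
Your proposal is correct and takes essentially the same route as the paper: both invoke Proposition \ref{prop:rank_sum} to split $\rank_{\vF_q}(\bA_i)$ into $\rank_{\vF_q}(\bC_i)+\rank_{\vF_q}(\bE_i)$, compute $\rank_{\vF_q}(\bC_i)=mk$ for all $i$, and then observe that the $v$-column of $\bE$ contributes only $\dim_{\vF_q}\mathcal{W}=m-s$ when row $b$ is deleted but the full $m$ otherwise, yielding $m(n-k)-s$ versus $m(n-k)$. Your block-by-block $\vF_q$-expansion of $\bE$ is merely a slightly more explicit phrasing of the same count the paper gives, and your decision to defer the precise high-probability estimates to \cite{bordage2020privacy} matches the paper's treatment as well.
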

\begin{proof}
From Proposition \ref{prop:rank_sum}, we have that $\rank_{\vF_q}(\bA_i) = \rank_{\vF_q}(\bC_i) + \rank_{\vF_q}(\bE_i)$ for all $1 \leq i \leq N$. 

In the first case, we have that $\rank_{\vF_q}(\bC_b) = mk$ and $\rank_{\vF_q}(\bE_b) = (n-k-1)m +(m-s)$ (with high probability), where the first part comes from the columns not indexed by $v$, which live in the full space $\mathbb{F}_{q^m} = \mathcal{W} + \mathcal{V}$ and the second part comes from the column indexed by $v$, which lives in the subspace $\mathcal{W}$. Note that the equation  $\rank_{\vF_q}(\bE_b) = m(n-k)-s$ holds true with high probability due to the randomness of the matrix entries.

In the case of $i \neq b$, we still have that $\rank_{\vF_q}(\bC_i) = mk$, but now $\rank_{\vF_q}(\bE_i) = (n-k-1)m +m$ (with high probability), where the first part comes from the columns not indexed by  $v$ and the second part comes from the column $v$ (observe that in this case all columns are in the full space $\mathbb{F}_{q^m} = \mathcal{W} + \mathcal{V}$). Note that the equation  $\rank_{\vF_q}(\bE_i) = m(n-k)$ holds true with high probability due to the randomness of the matrix entries.
\end{proof}

\subsection{AMG PIR scheme}  \label{sec:gaborit} 
In the following, we describe the PIR scheme presented in \cite{amg} with respect to our code-based framework. Later, we also present the lattice-based attack \cite{liu2016cryptanalysis} in terms of solving the distinguishability problem. Note that the original PIR scheme differs from our description in the following way: \begin{itemize}
    \item Database setup: in \cite{amg}, the authors consider the database elements to be vectors over the base field $\vF_p$. Moreover, each query element is a matrix over $\vF_p$. In the following description, we use an equivalent setup where the database files are single elements in $\vF_{p}$ and query elements are vectors over $\vF_p$.
    \item Noise-scrambling matrix $\Delta$: the authors introduce an invertible diagonal matrix $\Delta$ in order to disguise the soft-noise error vectors from the hard-noise error vectors. In our description, we ignore this scrambling matrix $\Delta$, as we will see in the security discussion that $\Delta$ has no effect on the column space of the query matrix.
    \item In \cite{amg}, the rate $k/n$ of the underlying linear code is fixed $k/n=0.5$. In our description we use an arbitrary rate. 
\end{itemize}

\subsubsection{Scheme} In this scheme, we work over a finite field $\vF_p$, where $p$ is a prime number. We will see $\vF_p$ as $\{-\lfloor \frac{p}{2} \rfloor, \ldots, \lfloor \frac{p}{2} \rfloor \}$.
\paragraph{Setup:}
Assume that the database is of the form $\mathbf{M} = (m_{i}) \in \{0,1,\ldots,2^\ell - 1 \}^{N}$ with $\ell = \lceil\log_2(N)\rceil+1$, i.e., there are $N$ files in the database each of size $\ell$ bits. Note that if the file size is bigger than $\ell$ bits, then we split the files in chunks of $\ell$ bits. Suppose the user wants to retrieve the $b$-th file from the database. 

Let $p$ be a prime number greater than $2^{3 \ell}$ and $t = 2^{2\ell}$. 
The retrieval function is given by the remainder of the Lee weight corresponding to modulo $t$, i.e., 
\begin{align*}
    f: \vF_p &\to \vF_p, \\ x &\mapsto x - \lweightt{x \mod t},
\end{align*} where $\mathrm{wt}_{L_t}$ denotes the Lee weight on $\vZ/t\vZ = \{0,1,\ldots,t-1 \}$, which is defined as \[\lweightt{z}:= \min\{|z|,t-|z|\}.\] The set $X = \{0,1,\ldots,2^\ell-1\}$, $Y= \{-1,1\} \subseteq \ker(f)$ and $Z= \{t\} \subseteq f^{-1}(\vF_p^\times)$. 

 Now observe that a linear combination of elements in $Y$ with scalars from $X$ having arbitrary number of terms does not necessarily belongs to $\ker(f)$. However, the condition is satisfied when we have at most $N$ number of terms in the linear combination: for $x_1,\ldots,x_N \in X$ and $y_1,\ldots,y_N \in Y$ we have that \[|x_1y_1 + \cdots + x_N y_N | \leq N 2^\ell < \frac{t}{2}, \] and hence
\begin{align*}
f\left(\sum_{i=1}^N x_iy_i\right) &= \sum_{i=1}^N x_iy_i - \lweightt{\sum_{i=1}^N x_iy_i \mod t}\\
& = \sum_{i=1}^N x_iy_i - \sum_{i=1}^N x_iy_i =0.
\end{align*}
Further we have that for $y \in Y, x \in X$ and $z \in Z$
\begin{align*}
    f(y+xz) &= f(y +xt) \\ &= y+xt - \lweightt{y+xt \mod t} \\ 
    &= y+xt - \lweightt{y \mod t} \\ &= y+xt - y  \\ &= xt = xf(t),
\end{align*}
since $f(z) = f(t)= t-\lweightt{t \mod t} = t.$

Let $\cC$ be a random linear $[n,k]$ code over $\mathbb{F}_{p}$, which is kept secret by the user.

\paragraph{Query generation:}
For the encoding and decoding, we follow the same procedure as in Section \ref{sec:pirate} and \ref{sec:lukas}.

Let $\bG$ be a generator matrix of $\C$, and let $I \subseteq \{1,\ldots,n\}$ be an information set. We use $\bG$ to perform the encoding, i.e., the encoding map is $\Enc: \vF_q^k \to \vF_q^n$ given by $\ba \mapsto \ba \bG$.

Let $\ba_1,\ldots,\ba_N$ be randomly chosen vectors in $\vF_q^k$, and define the corresponding codewords
$\bc_i:= \Enc(\ba_i) = \ba_i \bG$ for all $i \in \{1, \ldots,  N\}$.

As in Section \ref{sec:pirate} and \ref{sec:lukas}, we perform the decoding by adding no errors at the coordinates that are indexed by $I$. 

Let $v$ be a fixed element in $I^C$. Now, we choose error vectors $\be_1,\be_2,\ldots,\be_N$ randomly in $\vF_{q^m}^n$ such that
\begin{enumerate}
    \item $\supp(\be_i) \subseteq I^C$ for all $i \in \{1, \ldots, N \}$,
    \item $ \be_i[v] \in \{\pm1\}$ for all $i \neq b$, and $\be_b[v] = t$.
\end{enumerate}

Let $\bq_i:=\bc_i+\be_i$ for all $i\in \{1,\dots, N\}$.
The query is then given by 
\[Q := \{ \bq_1, \bq_2,\ldots,\bq_N \}.\]

\paragraph{Reply generation:} 
The response is generated by 
computing 
\[\br = \sum_{i=1}^N m_{i} \bq_i.\]

\paragraph{Reply extraction:}

Write $\br = \bc + \be$, where $\bc = \sum_{i=1}^N m_i \bc_i$ and $\be = \sum_{i=1}^N m_i \be_i$.

Since  $I$ is an information set and $\supp(\be) \subseteq I^C$, we can perform the decoding on $\br$ by computing 
$$\br - \br_I \bG_I^{-1} \bG = \be = \sum_{i=1}^N m_i \be_i.$$
We will only focus on the $v$-th coordinate of $\be$ and apply the retrieval function to obtain
\begin{align*}
     f\left(\sum_{i=1}^N m_i \be_i[v]\right) & =  \sum_{i=1}^N m_i \be_i[v] - \lweightt{ \sum_{i=1}^N m_i \be_i[v] \mod t} \\
     & = \left(\sum_{\substack{i=1\\i \neq b}}^N m_i \be_i[v] - \lweightt{\sum_{\substack{i=1\\i \neq b}}^N m_i \be_i[v] \mod t}\right) + m_b \be_b[v]\\
     & = m_b \be_b[v]= m_b t.
\end{align*}
 This works since \[  | \sum_{\substack{i=1\\i \neq b}}^N m_i \be_i[v]  | < t/2 \quad \mbox{and} \quad \mbox{$m_b\be_b[v]$ is a multiple of $t$}, \] and hence \[\lweightt{ \sum_{i=1}^N m_i \be_i[v] \mod t} = \lweightt{\sum_{\substack{i=1\\i \neq b}}^N m_i \be_i[v] \mod t} = \sum_{\substack{i=1\\i \neq b}}^N m_i \be_i[v].\]
 
Now since $\gcd(t,p)=1$, we can retrieve $m_b$.

\subsubsection{Security}
In \cite{liu2016cryptanalysis}, Liu et al. presented a lattice-based attack on the AMG PIR scheme. The method used in the attack can be described as per the first strategy, mentioned in Section \ref{sec:general-security}, to solve the distinguishability problem. 

Let $\bA$ be the matrix containing all the query vectors as rows, i.e., $$\mathbf{A} = \begin{pmatrix} \mathbf{q}_1 \\ \mathbf{q}_2 \\ \vdots \\ \mathbf{q}_N \end{pmatrix} = \begin{pmatrix} \bc_1+\be_1 \\   \bc_2 + \be_2 \\ \vdots \\  \bc_N + \be_N \end{pmatrix}.$$
As discussed in the security part of Section \ref{sec:pirate}, the vector $\left(\be_1[v],\be_2[v],\ldots,\be_N[v]\right)$ belongs to the column span of $\bA$. 

Recall that by construction, the vector $\left(\be_1[v],\be_2[v],\ldots,\be_N[v]\right)$ has $N-1$ entries from $\{-1,+1\}$ and one entry with value equal to $t$. If we delete the $b$-th row of $\bA$, call it the matrix $\bA_b$, then the vector $\left(\be_1[v],\ldots,\be_{b-1}[v], \be_{b+1}[v],\ldots,\be_N[v]\right)$ will be, with a very high probability, the shortest vector in the $p$-ary lattice generated by the columns of $\bA_b$. More precisely, the lattice is generated by the $n$ columns of $[\bA_b|p\mathbf{Id}_{N-1}]$. However, it is still infeasible to find this vector due to the large dimension of the lattice.

In \cite{liu2016cryptanalysis}, the authors construct multiple small dimensional lattices. Let $k \leq s \leq N$, and let $\bA^{(1)},\ldots,\bA^{(\lceil N/s \rceil)}$ be a row-wise partitioning of the matrix $\bA$, i.e.,  $\bA^{(i)}$ is the $s \times n$ matrix given by $s$ rows of $\bA$ indexed by $\{(i-1)s+1,\ldots,is\}$. Now, let $\mathcal{L}_i$ be the $p$-ary lattice generated by the columns of $\bA^{(i)}$. Note that the dimension of the lattices $\mathcal{L}_i$ is $s$, hence the attacker chooses $s$ such that implementing basis reduction algorithms for $\mathcal{L}_i$ is feasible. In order to find the index $b$, the attacker goes through each of these lattices. 

Note that the index $b$ of the desired file corresponds to the lattice $\mathcal{L}_{\lfloor b/s \rfloor}$, which the attacker is able to find,
and then the attacker finds the index $b$ by solving the closest vector problem for $\mathcal{L}_{\lfloor b/s \rfloor}$. 

More in detail, in the case of $i \neq \lfloor b/s\rfloor$, we observe that the shortest vector in $\mathcal{L}_i$ corresponds to the vector $(\be_{(i-1)s+1}[v],\ldots,\be_{is}[v])$ having entries in $\{-1,+1\}$. This observation does not hold in the case of $i=\lfloor b/s \rfloor$ due to the existence of large $t$. The attacker uses the lattice reduction algorithms to find the shortest vector in each $\mathcal{L}_i$, and consequently finds the corresponding lattice $\mathcal{L}_{\lfloor b/s \rfloor}$. 

Now, the index $b$ can be located using solving the closest vector problem. Let $j = \lfloor b/s \rfloor$. Then observe that $(\be_{(j-1)s+1}[v],\ldots,\be_{js}[v]) \in \mathcal{L}_j$ is the closest lattice vector to $(0,\ldots,0,t,0,\ldots,0)$ (with $t$ at the $b$-th position). To find the index $b$, we can use  Kannan's embedding technique \cite{kannan1987minkowski} to solve (at most) $s$ instances of the closest vector problem with inputs vector of the form $(0,\ldots,0,t,0,\ldots,0)$.

\subsection{Ring-LWE based PIR schemes} \label{sec:lwe} 
In the section, we describe the PIR schemes constructed using the Ring-LWE (RLWE) based homomorphic encryption schemes. In particular, we consider the construction of XPIR scheme \cite{XPIR} that uses the Ring-LWE based homomorphic encryption scheme presented in \cite{brakerski2011fully}. 

The original PIR scheme differs from our description in the error distribution as follows.
In \cite{XPIR}, the authors use two different distributions $\chi$ and $\chi^\prime$ to sample errors. The distribution $\chi$ is used to generate the public key and the distribution $\chi^\prime$, having larger variance, is used for encryption. In the following description, we consider only one distribution, mimicking $\chi^\prime$, to sample error vectors in the query generation process.

We would like to remark that in the following description, the database elements and the query elements are polynomials of degree smaller than $n$ with coefficients in $\R$, which can also be represented by vectors in $\R^n$.

\subsubsection{Scheme}
In this scheme, we work over a finite ring $\vZ/q\vZ$, where $q$ is a positive integer. Instead of a random linear code over $\vZ/q\vZ$, we consider a random constacyclic code over $\vZ/q\vZ$.
\paragraph{Setup:}
Let $q, t$ be positive integers with $t<q$ and $\gcd(t,q)=1$. The retrieval function is given by
\begin{align*}
  f: \vZ/q\vZ &\to \vZ/q\vZ , \\
x &\mapsto x \pmod{t}.
\end{align*}
Let $\chi$ be a discrete Gaussian distribution with standard deviation $\sigma$. The parameters $q,n,t,\sigma$ are chosen such that they satisfy $Nt^2\sigma \sqrt{n} < q/2$, where $n$ is the length of the linear code that will be used in query generation. \\
Now, we define the subsets 
\begin{align*}
&X = \{0,\ldots,t-1\} \subseteq \vZ/q\vZ,\\
&Y= \{t y \mid y \mbox{ is sampled from the distribution } \chi \},\\
&Z = \{ t y +1 \mid y \mbox{ is sampled from the distribution } \chi\}.
\end{align*}
Observe that for $x_1,x_2,\ldots,x_N \in X $ and $ty_1,ty_2,\ldots,ty_N \in Y$ we have that
\begin{align*}
    f\left(\sum_{i=1}^N x_i ty_i\right) &= \sum_{i=1}^N x_i t y_i  \mod t \\
    &= 0.
\end{align*}
This works since the choice of parameters $q,n,t,\sigma$ implies that $|\sum_{i=1}^N x_i t y_i| < q/2$ with very high probability. 
And for $x \in X, ty \in Y$ and $tz+1 \in Z$ we have that 
\begin{align*}
    f(y+xz) &= ty + x(tz+1)   \mod t \\
    &= x \mod t = x = x f(z),
\end{align*}
since $|ty+x(tz+1)| < q/2$.

Let $n$ be a power of 2, and let $R_q := (\vZ/q\vZ)[x]/(x^n+1)$. 
Let $\mathbf{M} = (m_{i}) \in \left(X[x]/(x^n+1)\right)^N$, i.e., there are $N$ files in the database and each file is an element in $R_q$ with coefficients in $X$. In particular, each file is of size $\log_2(tn)$ bits. Suppose the user wants to retrieve the $b$-th file from the database. \\
Let $\cC$ be a constacyclic code of length $n$ over $\vZ/q\vZ$ generated by some randomly chosen $s \in R_q$, i.e., $\cC$ is a ideal in $R_{q}$ generated by $s$. The code is kept secret by the user.

 \paragraph{Query generation:}
We use the generating polynomial $s$ to define the encoding map, i.e., $Enc : R_q \to R_q$ is given by $a \mapsto as$.

Let $a_1,a_2,\ldots,a_N$ be randomly chosen elements in $R_q$, and define $N$ codewords $c_i := a_is$ for all $i \in \{1,\ldots,N\}$.

 Now, we choose the errors $e_1,e_2,\ldots,e_N$ in $R_{q}$ such that they satisfy the following two conditions that allow the reply extraction: 
 \begin{enumerate}
    \item $e_i = t y_i$, with $y_i$ sampled from the distribution $\chi$, for all $i \neq b$,
     \item $e_b = t y_b + 1$ with  $y_b$ sampled from $\chi$. 
 \end{enumerate}

Let $\bq_i:= (a_i,c_i + e_i)$ for all $i\in \{1,\dots, N\}$.
 The query is then given by 
 \[Q := \{ \bq_1, \bq_2,\ldots,\bq_N \}.\]

 \paragraph{Reply generation:} 
 The response is generated by 
 computing 
 \[\br = \sum_{i=1}^N m_{i} \bq_i = \sum_{i=1}^N (m_ia_i, m_ic_i + m_ie_i)=: (r_1, r_2).\]

 \paragraph{Reply extraction:}

By applying the encoding map $Enc$ on $r_1$, we first decode $r_2$ to obtain the error part, i.e., 
\[r_2 - Enc(r_1) = r_2 - s r_1 = \sum_{i=1}^N m_i e_i.\] 
  After that we can use the retrieval function $f$, 
\begin{align*}
    f\left(\sum_i^N m_i e_i \right) &= \sum_{i=1}^N m_i t y_i + m_b \pmod{t} \\
    & = m_b.
\end{align*}
Note that here we apply $f$ on an element of $R_q$, which is done by applying $f$ on each coefficient.

The last equality follows from the conditions on the parameters $n,q,t,\sigma$, since the maximal coefficient of $\sum_{i=1}^N m_i e_i$ is, with high probability, upper bounded by $Nt^2\sigma \sqrt{n}$ (see \cite[Lemma 1]{brakerski2011fully}), which is less than $q/2$.

\subsubsection{Security}
The security of this scheme is based on the hardness of solving the polynomial learning with error (PLWE) problem, which is a simplified version of the ring LWE problem. 

Let $R_q = \vZ/q\vZ[x]/(x^n+1)$, and let $\chi$ be a narrow discrete Gaussian distribution on $R_q$. Then the PLWE assumption states that it is computationally hard to distinguish a polynomial number of samples of the form $(a_i,a_is+e_i)$ and the same number of samples of the form $(a_i,u_i)$, where $s, a_i$'s and $u_i$'s are sampled uniformly from $R_q$ and the $e_i$'s are sampled from $\chi$.

Moreover, \cite[Proposition 1]{brakerski2011fully} states that if the samples are of the form $(a_i,a_is+te_i)$, where $a_i,s,e_i$ are as above and $t \in (\vZ/q\vZ)^\times$, then distinguishing such samples from the uniform samples is equivalent to the PLWE assumption. 

\renewcommand{\arraystretch}{1.7}
\afterpage{\begin{landscape}
\begin{table}[p]
    \centering
    \begin{tabular}{|p{3cm}|K{3cm}|K{4.7cm}|c|K{2cm}|c|}  
    \hline {\centering \textbf{PIR Scheme}} & \textbf{Base ring $\R$}  & \textbf{Retrieval function} & \textbf{Set} $X$ & \textbf{Set} $Y$ & \textbf{Set} $Z$ \\ \hline
    Basic PIR using field homomorphisms     & a finite field $\vF_q$ & $\begin{aligned}[t]
         f:  \vF_q & \to \vF_q \\
           x & \mapsto x
    \end{aligned}$ & $\vF_q$ & $\{0\}$ & $\vF_q^\times$ \\ \hline
    HHWZ PIR & a finite field extension $\vF_{q^m}$  & $\begin{aligned}[t]
         f:  \vF_{q^m} & \to \vF_{q^m} \\
           x & \mapsto \proj_{\mathcal{V}}(x),
    \end{aligned}$ \linebreak where $\mathcal{V}$ is a non-trivial $\vF_q$-subspace of $\vF_{q^m}$  & $\vF_q$ & $\mathcal{W}$ \linebreak such that $\mathcal{V} \oplus \mathcal{W} = \vF_{q^m}$ & $\mathcal{V} \setminus \{0\}$ \\ \hline
    AMG PIR & a prime field $\vF_p$ where $p$ is prime greater than $2^{3\ell}$ for some positive integer $\ell$ & $\begin{aligned}[t]
    f: \vF_p &\to \vF_p, \\ x &\mapsto x - \lweightt{x \mod t},
\end{aligned}$ where $t = 2^{2\ell}$  & $\{0,1,\ldots,2^\ell-1\}$ & $\{-1,1\}$ & $\{t\}$ \\  \hline
    LWE-based PIR & $\vZ/q\vZ$ \linebreak for some positive integer $q$ & $\begin{aligned}[t]
         f:  \vZ/q\vZ & \to \vZ/q\vZ \\
           x & \mapsto x \pmod{t},
    \end{aligned}$ \linebreak for some positive integer $t <q $ with $\gcd(t,q)=1$ & $\{0,\ldots,t-1\}$ & $\{t y \mid y \leftarrow \chi \}$ & $\{ t y +1 \mid y \leftarrow  \chi\}$\\ \hline
    \end{tabular}
    \caption{Comparison of different PIR schemes with respect to the code-based framework.}
    \label{comparison}
\end{table}
\end{landscape}}
\renewcommand*{\arraystretch}{1}

\section{Theoretical remarks} \label{sec:remarks} \subsection{Generic PIR scheme vs code-based framework}

A natural question would be to ask whether any single server PIR scheme can be described in terms of the code-based framework. The answer is no, as the number theoretic PIR scheme by Kushilevitz and Ostrovsky \cite{kushilevitz1997replication} does not fit the framework. However, if we restrict to the class of PIR schemes that generates replies by contracting the database elements and the query elements using linear combinations (which will be denoted from now on as additive PIR schemes), then the answer is yes.  In the following, we discuss the requirements of an arbitrary additive PIR scheme and argue the necessity of the elements in the code-based framework to fulfil those requirements:
\begin{enumerate}
    \item \textit{Ambient space:} An additive PIR scheme needs two operations: multiplication $(\ast)$ between database elements and query elements, and addition $(+)$ of those products. Hence, the canonical choice of the ambient space is rings. For practical reasons, the rings should be finite. 
    \item \textit{Retrieval:} Let the database be denoted by $\mathcal{DB}=\{db_1,\ldots,db_N\}$, and the corresponding query be given by $Q = \{q_1,\ldots,q_N\}$. Suppose that the user wants to retrieve the $b$-th file. In an additive PIR scheme, the reply is $\sum_{i=1}^N db_i \ast q_i$ and user wants to retrieve $db_b$ from the reply. The operation $\sum_{i=1}^N db_i \ast q_i \mapsto db_b$, denoted by $g$, is an analogue to the retrieval function used in the code-based framework. First we note that $g$ annihilates $\sum_{i\neq b} db_i \ast q_i$ in such a way that we are only left with $g(db_b \ast q_b)$. And then $db_b$ is recovered from $g(db_b \ast q_b)$. These two properties imply that $db_i$'s and $q_i$'s live in special subsets of the ambient space $R$. Let $X$ denote the space of database elements, $Y$ denote the space of query elements that are not associated with the desired file and $Z$ denote the space of query element associated with the desired file. The requirements on $g$ imply that: (1) a linear combination of elements in $Y$ with scalars in $X$ belongs to the kernel of $g$, and (2) $g(x \ast z) = x \ast g(z)$ and $g(z)$ is an invertible element, for any $x \in X$ and $z \in Z$. These two conditions are the basis of the conditions of the retrieval function used in the code-based framework.
    \item \textit{Privacy:} Another important aspect of a PIR scheme is privacy, i.e., given a query $Q$, it should be computationally infeasible to determine the index $b$ of the desired file. Let us look at the scenario where we directly use elements in $Y$ and $Z$ to generate query elements. Then the privacy relies on the hardness of the following decisional problem: given $q \in Y \cup Z$, decide whether $q \in Y$ or $q \in Z$. In general this may not be a hard problem, as one can apply the retrieval function to distinguish the elements between $Y$ and $Z$. Therefore, to ensure privacy we must add some randomness to the query elements. Moreover, the user should be able to remove this randomness even after receiving the reply that contains their linear combinations. This is exactly the rationale of linear error-correcting codes. We treat the elements of $Y$ and $Z$ as errors, and the added randomness belongs to a random linear code. 
\end{enumerate}

\subsection{On security of PIR schemes}
In terms of the code-based framework, the security of a PIR scheme relies on the type of the underlying retrieval function. As we have noticed from the examples in Section \ref{sec:examples}, the following type of retrieval functions are not safe to use.
\begin{enumerate}
    \item \textit{Field homomorphisms:} In the case where the retrieval function is a non-trivial field homomorphism, the PIR scheme is then equivalent to the one described in Section \ref{sec:pirate}. The kernel of the retrieval function must be $\{0\}$, as $\{0\}$ is the only proper ideal in any field. As a consequence, determining the index of the desired file becomes an easy task of finding a unitary vector in the column space of the query matrix, thus it suffers from the first attack strategy discussed in Section \ref{sec:general-security}. 
    \item \textit{Vector space homomorphisms:} In this case, the resulting PIR scheme is equivalent to HHWZ PIR scheme \cite{holzbaur2020computational}, described in Section \ref{sec:lukas}. The kernel of a non-trivial linear map is a proper subspace of the parent vector space. This results in an exceptionally low rank of the matrix that is obtained from the query matrix by deleting the row that corresponds to the desired file, thus it suffers from the second attack strategy discussed in Section \ref{sec:general-security}.
\end{enumerate}

We can generalize these two cases to more types of retrieval functions. 
Clearly, the weakness of vector space homomorphisms can also be observed in the case of free module homomorphims, because of the existence of the notion of rank and dimension for free modules. On the other hand, the weakness of field homomorphisms can be seen in the case of local ring homomorphims. Let $R$ be a finite local ring with maximal ideal $M$, then the kernel of the retrieval function is a subideal of $M$. Note that there exists an integer $\ell$ such that $M^\ell = \{0\}$ and $M^{\ell-1} \neq\{0\}$. Let $a \in M^{\ell-1} \setminus \{0\}$, then note that $ar = 0$ for all $r \in M$. This implies that the special column vector $(\be_1[v],\ldots,\be_N[v])$, when multiplied by $a$, results in a unit vector. Hence, similar to the field homomorphism case, we observe the existence of a unit vector in the column space of the query matrix. 

The other two schemes, presented in Section \ref{sec:gaborit} and \ref{sec:lwe} respectively, do not use additive retrieval functions. Both the schemes work on the idea of using small modulus errors in a large modulus ambient space. Due to which the security eventually relies on finding short vectors in a high dimensional lattice, which is a computationally hard problem. However, in the case of AMG PIR scheme, the problem breaks down over multiple small dimensional lattices and hence the attack becomes feasible. 

In order to construct an additive PIR scheme, one may investigate the cases of structured morphisms like ring homomorphisms and module homomorphisms, or the cases of unstructured morphisms like the functions used in AMG scheme and LWE-based schemes. 

Furthermore, if one constructs an additive PIR scheme independently, then it would be worth translating the scheme in terms of the code-based framework to check for possible security issues.

\section*{Acknowledgments}
The authors would like to thank Lukas Holzbaur, Antonia Wachter-Zeh and Camilla Hollanti for useful discussions and Razane Tajeddine for bringing this interesting topic to their knowledge. 
This work was partially supported by Swiss National Science Foundation grant no. 188430 and Forschungskredit of the University of Zurich grant no. FK-19-080.

\bibliographystyle{plain}
\bibliography{references}

\end{document}